\newcommand{\ucmathlist}{%
    \def\alpha{\mathrm{A}}%
    \def\beta{\mathrm{B}}%
    \let\gamma=\Gamma
    \let\delta=\Delta
    \def\epsilon{\mathrm{E}}%
    \def\varepsilon{\mathrm{E}}%
    \def\zeta{\mathrm{Z}}%
    \def\eta{\mathrm{H}}%
    \let\theta=\Theta
    \let\vartheta=\Theta
    \def\iota{\mathrm{I}}%
    \def\kappa{\mathrm{K}}%
    \let\lambda=\Lambda
    \def\mu{\mathrm{M}}%
    \def\nu{\mathrm{N}}%
    \let\xi=\Xi
    \let\pi=\Pi
    \let\varpi=\Pi
    \def\rho{\mathrm{P}}%
    \def\varrho{\mathrm{P}}%
    \let\sigma=\Sigma
    \def\tau{\mathrm{T}}%
    \let\upsilon=\Upsilon
    \let\phi=\Phi
    \let\varphi=\Phi
    \def\chi{\mathrm{X}}%
    \let\psi=\Psi
    \let\omega=\Omega
}
\theoremstyle{plain}
    \newtheorem{theorem}{Theorem}
    \newtheorem{proposition}[theorem]{Proposition}
    \newtheorem{corollary}[theorem]{Corollary}
\theoremstyle{definition}
    \newtheorem{definition}{Definition}
    \newtheorem{remark}{Remark}
    \newtheorem{assumption}{Assumption}
\def\renewtheorem#1{%
    \expandafter\let\csname#1\endcsname\relax
    \expandafter\let\csname c@#1\endcsname\relax
    \gdef\renewtheorem@envname{#1}
    \renewtheorem@secpar
}
\def\renewtheorem@secpar{\@ifnextchar[{\renewtheorem@numberedlike}{\renewtheorem@nonumberedlike}}
\def\renewtheorem@numberedlike[#1]#2{\newtheorem{\renewtheorem@envname}[#1]{#2}}
\def\renewtheorem@nonumberedlike#1{  
    \def\renewtheorem@caption{#1}
    \edef\renewtheorem@nowithin{\noexpand\newtheorem{\renewtheorem@envname}{\renewtheorem@caption}}
    \renewtheorem@thirdpar
}
\def\renewtheorem@thirdpar{\@ifnextchar[{\renewtheorem@within}{\renewtheorem@nowithin}}
\def\renewtheorem@within[#1]{\renewtheorem@nowithin[#1]}
\begin{document}

\title{Distributed Optimal Control of Graph Symmetric Systems via Graph Filters}

\author{Fengjun Yang$^{\ast}$, Fernando Gama$^{\dag}$, Somayeh Sojoudi$^{\ddag}$ and Nikolai Matni$^{\S}$
\thanks{$^\ast$F. Yang is with the Dept. of Comput. and Info. Sci, University of Pennsylvania, Philadelphia, PA. $^\dag$F. Gama is with the Dept. of Elect. and Comput. Eng., Rice University, Houston, TX. $^\ddag$S. Sojoudi is with the Dept. of Elect. Eng. and Comput. Sci, University of California, Berkeley, CA. $^\S$N. Matni is with the Dept. of Elect. and Syst. Eng., Philadelphia, PA.}
\thanks{N.M. is supported by NSF awards CPS-2038873, CAREER award ECCS-2045834, and a Google Research Scholar award. F.Y. is in part supported by NSF CAREER award ECCS-2045834. F.G. and S.S. are supported by ONR and NSF grants.}%
}

\maketitle


\begin{abstract}
    Designing distributed optimal controllers subject to communication constraints is a difficult problem unless structural assumptions are imposed on the underlying dynamics and information exchange structure, e.g., sparsity, delay, or spatial invariance. In this paper, we borrow ideas from graph signal processing and define and analyze a class of \emph{Graph Symmetric Systems} (GSSs), which are systems that are symmetric with respect to an underlying graph topology. We show that for linear quadratic problems subject to dynamics defined by a GSS, the optimal centralized controller is given by a novel class of graph filters with transfer function valued filter taps and can be implemented via distributed message passing. We then propose several methods for approximating the optimal centralized graph filter by a distributed controller only requiring communication with a small subset of neighboring subsystems.  We further provide stability and suboptimality guarantees for the resulting distributed controllers. Finally, we empirically demonstrate that our approach allows for a principled tradeoff between communication cost and performance while guaranteeing stability.  Our results can be viewed as a first step towards bridging the fields of distributed optimal control and graph signal processing.
\end{abstract}



\section{Introduction} \label{sec:intro}

Computing a distributed optimal controller in which subcontrollers have access to subsets of global system information is in general a computationally intractable problem.  Indeed, even when restricted to quadratic costs, Gaussian noise, and linear dynamics, the resulting optimal controller can be nonlinear and difficult to compute \citep{witsenhausen1968counterexample}.  Nevertheless, significant progress has been made in distributed optimal controller synthesis over the past two decades by identifying structural assumptions on the underlying dynamics and information exchange structure such that the resulting distributed controller synthesis problem is convex. 

One such structural assumption that has been shown to lead to tractable distributed optimal control problems is spatial invariance \citep{bamieh2002distributed} (and other closely related notions of symmetry \citep{massioni2009distributed}). Such systems are invariant under subsystem permutations, and have been shown to have optimal centralized controllers that are approximately distributed.  In particular, this allows for distributed controllers that enjoy stability and near-optimality guarantees to be computed by appropriately truncating the centralized controller.

\emph{Contributions:} In this paper, inspired by results from graph signal processing, we introduce the notion of Graph Symmetric Systems (GSSs), which are systems that are symmetric with respect to an underlying graph topology (formalized in \S\ref{sec:LQR}-A).  We show that for such systems, the resulting Linear Quadratic (LQ) centralized optimal controller admits an efficient message passing implementation in the form of a novel class of graph filters defined by transfer function filter taps.  We subsequently propose and analyze two complementary approaches to computing near-optimal distributed controllers by truncating the centralized optimal controller subject to stability constraints.  By leveraging tools from robust System Level Synthesis (SLS) \citep{matni2017scalable, anderson_system_2019}, we show that these truncation algorithms can be solved via convex optimization, and that the resulting distributed controllers enjoy sub-optimality guarantees relative to the centralized optimal controller.  These results constitute an important first step towards bridging the complementary, but traditionally disparate, fields of distributed optimal control and graph signal processing.

\emph{Related work:} An alternative structural assumption for tractable distributed optimal control of linear systems can be specified in terms of the sparsity and delay patterns of the control system.  In particular, it is possible to characterize conditions on the sparsity and delay patterns of the information exchanged between subcontrollers relative to the propagation of signals through sparse and delayed distributed plants such that distributed optimal control is tractable.  The seminal paper \cite{rotkowitz2005characterization} introduced the notion of quadratic invariance,\footnote{We note that spatially invariant systems, as defined in~\cite{bamieh2002distributed}, also satisfy quadratic invariance.  We show in Appendix B that graph symmetric systems and controllers lead to optimal control problems satisfying quadratic invariance.} which built upon and generalized funnel causality \citep{bamieh2005convex}, showed that so long as subcontrollers could communicate as quickly as control signals propagated through the plant, then the resulting distributed optimal control problem could be solved via convex optimization.  This convex parameterization of sparse and delayed controllers has since been further generalized in the System Level Synthesis (SLS) \citep{anderson_system_2019} and Input-Output Parameterization (IOP) \citep{furieri2019input} frameworks, which allow for even richer classes of sparsity and delay patterns to be imposed on distributed controllers.

A related class of distributed controllers are those based on Graph Neural Networks (GNNs).  GNNs can be viewed as graph filters followed by pointwise nonlinear activation functions \citep{Ruiz2021-GNN}, and among other favorable properties, enjoy stability to graph perturbations \citep{Gama2020-Stability}.  While recent use of GNNs for distributed control has shown promise \citep{Gama2022-DistributedLQR, Gama2022-ControlGNN, yang2021communication}, such results currently lack strong guarantees of stability.  We believe the results in this paper are a first step towards addressing this gap in the literature, by explicitly connecting graph filters and distributed optimal controllers. The direct relationship between graph filters and GNNs suggests that understanding the former will give insight in the effects of the latter.

\textit{Notation}: We use upper- and lower-case letters such as $A$ and $x$ to denote matrices and vectors respectively, although lower-case letters might also be used for scalars or functions (the distinction will be apparent from the context). For both upper- and lower-case letters, we use \textit{boldface} such as $\mtLambda$ and $\boldsymbol\phi$  to denote transfer matrices or vector/scalar transfer functions.
\section{The Linear Quadratic Regulator Problem for Graph Symmetric Systems} \label{sec:LQR}
Consider a discrete-time linear time-invariant (LTI) system composed of $N$ interconnected scalar subsystems, each with state $x_i(t) \in \fdR$, control input $u_i(t) \in \fdR$ and which evolves under the dynamics
\begin{equation}\label{eq:subsys-dynamics}
    x_i(t+1) = \sum_{j=1}^{N}A_{ij}x_j(t) + \sum_{j=1}^{N}B_{ij}u_j(t) + w_i(t),
\end{equation}
for suitable matrices $A_{ij}, B_{ij}$ describing the interaction between subsystems. Here $w_i(t)$ is an i.i.d. zero-mean noise. We can compactly express the dynamics of the full system in terms of the joint states $x(t) = [x_1(t), \ldots, x_N(t)]^\top$ and joint control actions $u(t) = [u_1(t), \ldots,u_N(t)]^\top$ as
\begin{equation} \label{eq:linearDynamics}
    x(t+1) = A x(t) + B u(t) + w(t),
\end{equation}
where $(A,B)$ are defined such that the global dynamics \eqref{eq:linearDynamics} are consistent with the subsystem dynamics \eqref{eq:subsys-dynamics}.

Our goal is to find a (potentially time-varying) state-feedback controller $K_t$ that minimizes the cost
\begin{equation} \label{eq:quadraticCost}
    J \big( \{x(t)\}, \{u(t)\} \big) := \lim_{T \to \infty} \frac{1}{T}\sum_{t=0}^{T-1} \xp_{w} \big[ x(t)^{\Tr} Q x(t) + u(t)^{\Tr} R u(t) \big]
\end{equation}
where $u(t) = K_t(x(t))$ and $Q \succeq 0, R \succ 0$ are known symmetric $N \times N$ matrices. The Linear Quadratic Regulator (LQR) problem is then given by
\begin{equation} \label{eq:LQG}
    \min_{\mtK} J\big( \{x(t)\},\{u(t)\} \big) \quad \text{s. t. } u(t) = K_t\big(x(t)\big).
\end{equation}
In the centralized setting where each subsystem has access to the global state, it is well-known that the controller that solves \eqref{eq:LQG} is a linear static controller $u(t) = K^{\opt} x(t)$ where $K^{\opt} = -(R+ B^{\top} P B)^{-1}B^{\top} P A$ and $P$ is the unique solution to the discrete-time algebraic Ricatti equation:
\begin{equation} \label{eq:Ricatti}
    P = A^{\top} P A - A^{\top} P B \big( R+ B^{\top} P B \big)^{-1} B^{\top} P A + Q.
\end{equation}
In this work, we consider a distributed variant of \eqref{eq:LQG} where each subsystem can only exchange information with a small subset of subsystems. Specifically, this communication constraint is encoded as a graph $\graph = \{\stV, \stE\}$, where $\stV = \{\lmv_{1},\ldots,\lmv_{N}\}$ is the set of $N$ components (nodes) and where $\stE \subseteq \stV \times \stV$ is the set of the corresponding interconnections (edges). It is assumed that the graph is undirected, i.e. $(\lmv_{i},\lmv_{j}) \in \stE$ if and only if $(\lmv_{j},\lmv_{i}) \in \stE$. As described in the introduction, general information exchange constraints can lead to non-convex optimal control problems \citep{witsenhausen1968counterexample}. However, as we show later, under suitable graph symmetry assumptions on the dynamics matrices $A, B$ and the cost matrices $Q, R$, the optimal centralized controller admits a distributed message passing implementation allowing for a principled tradeoff between communication complexity and controller performance.

In the rest of this section, we borrow ideas from graph signal processing \citep{Sandryhaila2013-DSPG} and introduce the notion of \textit{graph symmetric systems}. First, we introduce a convenient way to define operations that respect the underlying communication graph structure via the graph matrix description (GMD) $S \in \fdR^{N \times N}$.
The matrix $S$ is such that the $(i,j)^{\text{th}}$ entry is zero whenever there is no connection between components $\lmv_{i}$ and $\lmv_{j}$, i.e. $[S]_{ij}=0$ if $i \neq j $ and $(\lmv_{i},\lmv_{j}) \notin \stE$.
Note that, since the graph is undirected, the matrix $S$ is symmetric.
Therefore, it has an eigedecomposition in terms of an orthonormal basis of eigenvectors $S = V \Lambda_{S} V^{\top}$ where $\Lambda_{S} \in \fdR^{N \times N}$ is a diagonal matrix with elements $\lambda_{S,i} \in \fdR$ such that $S v_{i} = \lambda_{S,i} v_{i}$ for $v_{i} \in \fdR^{N}$ being the $i^{\text{th}}$ column of $V$. We now introduce the notion of a graph symmetric system.

\begin{definition}[Graph Symmetric System]\label{def:distributedLinear}
Given a GMD $S=V \Lambda_{S} V^{\top}$ for a graph $\mathcal G$, a linear system \eqref{eq:linearDynamics} is \textit{graph symmetric} with respect to $\mathcal G$ if the dynamics matrices $A, B$ are simultaneously diagonalized by $V$, i.e.,
    \begin{equation} \label{eq:distributedLinear}
    \begin{aligned}
        A = V \Lambda_{A} V^{\top} \quad &, \quad B = V \Lambda_{B} V^{\top},
    \end{aligned}
    \end{equation}
where $\Lambda_{A}, \Lambda_{B}$ are diagonal.
\end{definition}

Note that Definition \ref{def:distributedLinear} does not require the dynamics to be sparse. In fact, matrices $A$ and $B$ of the form in Definition \ref{def:distributedLinear} can be arbitrarily dense, i.e., the evolution of a subsystem state $x_i(t+1)$ can depend on subsystems that $i$ cannot directly communicate with \citep{Gama2019-LinearControl}. This is distinct from sparsity/delay constraints used in \cite{rotkowitz2005characterization, anderson_system_2019, furieri2019input}, and encodes a different notion of symmetry than that exploited in the distributed control of spatially invariant systems \citep{bamieh2002distributed}.

By well-known results in graph signal processing \citep{Sandryhaila2013-DSPG}, simultaneous diagonalizability of the system matrices $(A,B)$ and the GMD $S$ implies\footnote{Under the assumption that $S$ corresponds to a finite graph and has all distinct eigenvalues. On a high level, the result follows directly from the Cayley-Hamilton theorem.} that they can be written as matrix polynomials of $S$ of degrees at most $N-1$,
\begin{equation} \label{eq:polynomialOfS}
    A = \sum_{k=0}^{N-1}h_{A,k}S^k \quad , \quad B = \sum_{k=0}^{N-1}h_{B,k}S^k.
\end{equation}
Matrices that can be expressed in this matrix polynomial forms are called \textit{graph filters} \citep{Segarra2017-GraphFilterDesign} and the coefficients $h_{A,k}, h_{B,k}$ are referred to as the filter weights or \textit{filter taps}.
%

We now give a message-passing interpretation of graph symmetric systems. First, it can be seen from the sparsity pattern of the GMD $S$ that the output of $Sx(t)$ can be computed entirely as a linear combination of the states in nodes $1$ hop away in $\mathcal G$. To see this, consider the operation $S x(t)$ whose $i^{\text{th}}$ entry yields
\begin{equation} \label{eq:graphShift}
    [S x(t)]_{i} = \sum_{j:(\lmv_{j},\lmv_{i}) \in \stE} [S]_{ij} [x(t)]_{j}.
\end{equation}
%
More generally, when considering polynomials, it is observed that $S^{k}x(t)$ is equivalent to exchanging $k$ times information with one-hop neighbors.
Therefore, if the system matrix $A$ and the control matrix $B$ are polynomials of $S$, then the evolution of the system can be computed entirely by means of exchanges with neighboring nodes. Hence, the system dynamics can be viewed as implementing distributed message passing \citep{Ruiz2021-GNN}.
Examples of such linear, distributed systems, include both discrete-time and continuous-time diffusions, solutions to the heat equation, among many others, see \cite{Gama2019-LinearControl} and references therein.

For the rest of the paper, we also assume that the cost matrices for the LQR problem \eqref{eq:LQG} can also be simultaneously diagonalized with the dynamics matrices. Formally, we make the following assumption.
\begin{assumption}\label{assump:diagonalizableCost}
The system \eqref{eq:linearDynamics} defines a graph symmetric system with respect to a fixed GMD $S$, and the cost matrices $(Q,R)$ defining the LQR problem \eqref{eq:LQG} are graph symmetric with respect to $S$, i.e., they are simultaneously diagonalized by the ortho-bases $V$ satisfying $S=V\Lambda_S V^\top$.  In particular
\[
Q = V \Lambda_{Q} V^{\top} \quad, \quad R = V \Lambda_{R} V^{\top},
\]
where $\Lambda_Q, \Lambda_R$ are symmetric.
\end{assumption}
\section{Optimal Distributed Linear Controller via System Level Synthesis} \label{sec:SLS}

SLS provides a convex parameterization of achievable closed-loop system responses \citep{wang_separable_2018,anderson_system_2019}, which can be leveraged to show that the optimal controller for graph symmetric systems under Assumption \ref{assump:diagonalizableCost} can be written as a novel class of graph filters defined by transfer function valued filter taps.

\subsection{Background: System-Level Synthesis}

As noted in \S4 of \cite{anderson_system_2019}, we can compactly write the system dynamics \eqref{eq:linearDynamics} in the frequency domain as
\[(zI - A)\mathbf x = B \mathbf u + \mathbf w,\]
where $\mathbf x = \sum_{t=0}^{\infty} z^{-t}x(t)$ is the signal $x(t)$ in the $z$-domain, and idem for $\mathbf u$ and $\mathbf w$. For a (dynamic) linear state-feedback controller $\mathbf{u} = \mathbf K \mathbf x$, it follows immediately that
\begin{equation}\label{eq:system-response}
    \begin{aligned}
    \vcx &= (zI - A - B\mtK)^{-1} \vcw =: \mathbf\Phi_{x}(z) \vcw,\\
    \vcu &= \mtK (zI - A - B\mtK)^{-1} \vcw =: \mathbf\Phi_u(z) \vcw,
    \end{aligned}
\end{equation}
where $\resp_x(z) \in \fdC^{N \times N}$ and $\resp_u(z) \in \fdC^{N \times N}$ are system responses that map the disturbance $\vcw$ to state $\vcx$ and control input $\vcu$, respectively. The following SLS theorem states that all achievable responses lie in an affine subspace of strictly proper stable rational transfer functions $\frac{1}{z} \mathcal R \mathcal H_\infty$.

\begin{theorem} \emph{\cite[Thm. 4.1]{anderson_system_2019}}
For the LTI system evolving under the dynamics \eqref{eq:linearDynamics} and control policy $\vcu = \mtK \vcx$, the following statements are true:
\begin{enumerate}
    \item The affine subspace defined by
    \begin{equation}\label{eq:achievability-constraint}
        \begin{bmatrix} zI-A & -B \end{bmatrix} \begin{bmatrix} \mathbf\Phi_x \\ \mathbf\Phi_u    \end{bmatrix} = I, \quad \mathbf\Phi_x, \mathbf\Phi_u \in \frac{1}{z}\mathcal R \mathcal H_\infty
    \end{equation}
    parameterizes all system responses from $\vcw$ to $(\vcx, \vcu)$ as defined in \eqref{eq:system-response}, achievable by an internally stabilizing state feedback controller $\mtK$.
    \item For any transfer matrices $\mathbf\Phi_x, \mathbf\Phi_u$ satisfiying \eqref{eq:achievability-constraint}, the controller $\mtK = \mathbf\Phi_u \mathbf\Phi_x^{-1}$ is internally stabilizing and achieves the desired system response in \eqref{eq:system-response}.
\end{enumerate}
\label{thm:SLS}    
\end{theorem}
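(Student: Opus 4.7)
The plan is to establish the two claims of Theorem~\ref{thm:SLS} by going in each direction separately. The key algebraic observation throughout is that the achievability constraint \eqref{eq:achievability-constraint} expands to $(zI-A)\resp_x - B\resp_u = I$, which is precisely the closed-loop identity one obtains when substituting $\vcu = \mtK\vcx$ into the plant equation.

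First, for the forward direction of part 1, I would assume that $\mtK$ is internally stabilizing and verify that the responses defined in \eqref{eq:system-response} satisfy \eqref{eq:achievability-constraint}. Direct substitution gives
\begin{equation*}
    (zI-A)\resp_x - B\resp_u = (zI-A)(zI-A-B\mtK)^{-1} - B\mtK(zI-A-B\mtK)^{-1} = I.
\end{equation*}
Membership in $\frac{1}{z}\rhinfty$ follows from two facts: strict properness, since $(zI-A-B\mtK)^{-1}$ is strictly proper (numerator degree strictly less than denominator degree in $z$), and stability, which follows from the assumption that $\mtK$ is internally stabilizing (so all four closed-loop maps, including $\resp_x$ and $\resp_u$, belong to $\rhinfty$).

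Second, for the converse direction of part 1 and all of part 2, I would start with any $(\resp_x,\resp_u)$ satisfying \eqref{eq:achievability-constraint}, define $\mtK := \resp_u \resp_x^{-1}$, and verify that $\mtK$ is internally stabilizing and realizes $(\resp_x,\resp_u)$. The main technical step is showing $\resp_x$ is invertible as a rational transfer matrix: the identity $(zI-A)\resp_x = I + B\resp_u$ together with strict properness of $\resp_x$ and properness of the right-hand side ensures $\resp_x$ has full normal rank, so $\resp_x^{-1}$ is well-defined as a rational matrix. Given invertibility, rewriting the achievability identity as $(zI-A-B\mtK)\resp_x = I$ shows $\resp_x = (zI-A-B\mtK)^{-1}$ and hence $\resp_u = \mtK(zI-A-B\mtK)^{-1}$, matching the definitions in \eqref{eq:system-response}.

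Finally, to close out internal stability, I would write out the four closed-loop maps from disturbances at the state and at the input to the signals $(\vcx,\vcu)$, and observe that each of them can be expressed in terms of $\resp_x$, $\resp_u$, $A$, and $B$ in such a way that stability of $\resp_x,\resp_u \in \frac{1}{z}\rhinfty$ implies stability of all four. The main obstacle I expect is the invertibility of $\resp_x$; this is a subtle point that is sometimes glossed over in the literature but is essential for defining $\mtK$. Everything else reduces to algebraic manipulation of the closed-loop identity.
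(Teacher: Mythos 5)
This theorem is stated in the paper as a cited background result (Theorem 4.1 of the SLS reference) and is not proved there, so there is no in-paper proof to compare against. Your plan correctly reconstructs the standard argument from that reference: the forward direction by direct substitution, the converse by establishing invertibility of $\resp_x$ (via $(zI-A)\resp_x = I + B\resp_u \to I$ as $z\to\infty$, so $\det\resp_x$ is not identically zero), recovering $\resp_x = (zI-A-B\mtK)^{-1}$ from the closed-loop identity, and checking internal stability through the four closed-loop maps, each of which is an affine combination of $\resp_x$, $\resp_u$, and constant matrices and hence stable. You are also right that the invertibility of $\resp_x$ is the one step that genuinely needs an argument rather than pure algebra; your handling of it is sound.
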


For disturbance $w(t) \overset{\text{i.i.d.}}{\sim} \mathcal{N}(0,I)$, one can recast the optimization problem \eqref{eq:LQG} in terms of system responses $\mathbf\Phi_{x}$ and $\mathbf\Phi_u$ as
\begin{equation} \label{eq:SLSOptimization}
\begin{aligned}
    \underset{\mathbf\Phi_{x} , \mathbf\Phi_{u}}{\text{minimize}} &\ \quad 
    \htwonorm{\mtQ^{1/2}\mathbf\Phi_{x}}^{2} + \htwonorm{\mtR^{1/2}\mathbf\Phi_{u}}^{2} \\
    \text{s.t. } &\ \quad \text{constraint \eqref{eq:achievability-constraint}.}
\end{aligned}
\end{equation}
With a slight abuse of notation, we define $J(\resp_x, \resp_u)$ to be the LQR cost achieved by $\resp_x$ and $\resp_u$ in the objective of \eqref{eq:SLSOptimization} and $J(\mtK)$ as the cost \eqref{eq:quadraticCost} achieved by applying controller $\mtK$.

\subsection{SLS for Graph Symmetric Systems}
We now proceed to show that under Assumption \ref{assump:diagonalizableCost}, the optimal system response for a graph symmetric system that solves the LQR problem \eqref{eq:SLSOptimization} can be written as a graph filter.

\begin{theorem} \label{thm:optimalLinear}
Given a GMD $S = V \Lambda_{S} V^{\top}$, consider an instance of the LQR problem \eqref{eq:SLSOptimization} where the underlying system and cost satisfy Assumption~\ref{assump:diagonalizableCost}. Then, there exists a global optimum $(\optresp_x, \optresp_u)$ where both $\optresp_x$ and $\optresp_u$ are diagonalizable by $\mtV$, i.e.,
\begin{equation*}
\begin{aligned}
    \optresp_x &= V \mtLambda_x^* V^{\top}, \quad  \optresp_u &= V \mtLambda_u^* V^{\top},
\end{aligned}
\end{equation*}
where $\mtLambda_x^*$ and $\mtLambda_u^*$ are diagonal transfer matrices. Hence, the optimal controller $\mtK^\opt=(\optresp_u)(\optresp_x)^{-1}$ can also be diagonalized by $V$.
\end{theorem}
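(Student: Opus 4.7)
The plan is to transform the SLS optimization problem \eqref{eq:SLSOptimization} into the eigenbasis defined by $V$, where the problem decouples into $N$ independent scalar SLS problems, one for each eigenvalue of $S$. I will then show that the optimal solution to the transformed problem is diagonal, which when transformed back yields system responses simultaneously diagonalized by $V$.

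First, I would introduce the change of variables $\tilderesp_x = V^\top \resp_x V$ and $\tilderesp_u = V^\top \resp_u V$. Since $V$ is orthogonal, this is an invertible change of variables, and it preserves membership in $\frac{1}{z}\mathcal R\mathcal H_\infty$ because $V$ has constant (frequency-independent) entries. I would then verify that under Assumption~\ref{assump:diagonalizableCost}, multiplying the achievability constraint \eqref{eq:achievability-constraint} on the left by $V^\top$ and on the right by $V$ rewrites it as
\[
(zI - \Lambda_A)\,\tilderesp_x - \Lambda_B\, \tilderesp_u = I,
\]
a constraint in which only diagonal matrices act on $\tilderesp_x$ and $\tilderesp_u$.

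Second, using the fact that left/right multiplication by an orthogonal constant matrix preserves the $\mathcal H_2$ norm, I would rewrite the objective as $\htwonorm{\Lambda_Q^{1/2}\tilderesp_x}^2 + \htwonorm{\Lambda_R^{1/2}\tilderesp_u}^2$. Crucially, the $\mathcal H_2$ norm decomposes entrywise as a sum of squared scalar $\mathcal H_2$ norms. Combined with the diagonal form of the constraint, the entry $(i,j)$ of the problem becomes
\[
(z - \lambda_{A,i})\,[\tilderesp_x]_{ij} - \lambda_{B,i}\,[\tilderesp_u]_{ij} = \delta_{ij},
\]
which is completely decoupled across $(i,j)$. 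For $i\neq j$, the right-hand side is zero, so setting $[\tilderesp_x]_{ij} = [\tilderesp_u]_{ij} = 0$ satisfies the constraint and simultaneously minimizes the non-negative contribution $\lambda_{Q,i}\,\htwonorm{[\tilderesp_x]_{ij}}^2 + \lambda_{R,i}\,\htwonorm{[\tilderesp_u]_{ij}}^2$ to the cost. Hence there exists an optimum in which $\tilderesp_x$ and $\tilderesp_u$ are diagonal transfer matrices $\Lambda_x^*$ and $\Lambda_u^*$, whose diagonal entries solve $N$ independent scalar SLS problems parameterized by $(\lambda_{A,i},\lambda_{B,i},\lambda_{Q,i},\lambda_{R,i})$.

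Transforming back yields $\optresp_x = V\Lambda_x^* V^\top$ and $\optresp_u = V\Lambda_u^* V^\top$, both simultaneously diagonalized by $V$, and hence $\mtK^\opt = \optresp_u(\optresp_x)^{-1} = V\,\Lambda_u^* (\Lambda_x^*)^{-1}\,V^\top$ is also diagonalized by $V$. The main subtlety I anticipate is handling degenerate scalar subproblems: if some $\lambda_{B,i}=0$ with $|\lambda_{A,i}|\geq 1$, the corresponding scalar SLS problem is infeasible, so the theorem implicitly requires a controllability-type condition on each eigencomponent; barring such edge cases, the decoupling argument goes through cleanly. A secondary point is that the optimum within $\frac{1}{z}\mathcal R\mathcal H_\infty$ may not be unique off the diagonal if $\lambda_{B,i}\neq 0$, but uniqueness is not claimed—only existence of a diagonal optimum is asserted.
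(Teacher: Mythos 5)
Your proposal is correct and follows essentially the same route as the paper: both pass to the eigenbasis of $V$, use unitary invariance of the $\mathcal H_2$ norm and the diagonal form of $(zI-\Lambda_A)$, $\Lambda_B$, $\Lambda_Q$, $\Lambda_R$ to observe that the constraint and cost decouple entrywise, and conclude that zeroing the off-diagonal entries (whose constraint right-hand side is $0$) preserves feasibility without increasing the cost. The paper phrases this as projecting an arbitrary optimum onto its diagonal rather than as $N$ independent scalar subproblems, but the mechanism is identical; your remark about degenerate uncontrollable eigencomponents is a reasonable caveat that the paper likewise leaves implicit in assuming \eqref{eq:SLSOptimization} admits an optimum.
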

\begin{proof}
See Appendix.
\end{proof}

\begin{remark}
Note that the elements defined by the diagonal responses $\mtLambda^\opt_x, \mtLambda^\opt_u$ are \emph{transfer functions} $[\mtLambda^\opt_x]_{ii}(z), [\mtLambda^\opt_u]_{ii}(z)$. Thus the resulting graph filter taps are transfer functions as well, i.e., a transfer function $\resp(z)$ that is simultaneously diagonalizable with the matrix $S$ can be written as:
\begin{equation} \label{eq:optimalLinear}
    \resp(z) = \sum_{k=0}^{N-1}\boldsymbol\phi_k(z)S^k.
\end{equation}
\end{remark}

The main implication of Theorem~\ref{thm:optimalLinear} is that the optimal linear state-feedback controller for graph symmetric systems under Assumption \ref{assump:diagonalizableCost} is a graph filter and can thus be implemented via distributed message passing. We note, however, that the above result implies that the resulting optimal system response $\optresp := (\optresp_x, \optresp_u)$ could be dense, as $S^{N-1}$ is dense if $S$ defines a connected graph. This can be undesirable in practice, as it requires $N-1$ communication exchanges with one-hop neighbors, potentially causing significant delays if the size $N$ of the graph is large. In the next section, we leverage a robust variant of the SLS parameterization given in Theorem \ref{thm:SLS} to restrict the optimal system responses to only the first $F\ll N-1$ filter taps while guaranteeing stability and near optimal performance.

We end this section by noting that in the graph signal processing literature \citep{Sandryhaila2013-DSPG}, a controller of the form of \eqref{eq:optimalLinear} is known as a linear, shift-invariant (LSI) graph filter, and is analogous to an LTI filter.
Note that $S \optresp \vcx = \optresp S \vcx$, hence the name.
In particular, Equation \eqref{eq:optimalLinear} is a spatially finite impulse response (FIR) graph filter \citep{Segarra2017-GraphFilterDesign} that is completely characterized by a finite set of $N$ filter taps that can be conveniently described by a collection of transfer functions $\boldsymbol\phi^{\opt} = [\boldsymbol\phi_{0}^{\opt},\ldots,\boldsymbol\phi_{N-1}^{\opt}]^{\Tr} \in \fdR^{N}$.  We emphasize that the transfer functions themselves are not restricted to be temporally FIR.
Spatially FIR graph filters are also known as convolutional graph filters \citep{Ruiz2021-GNN} due to their sum-and-shift nature, understanding that the effect of the operation $S \vcx$ is to shift the signal around the graph (thus, oftentimes, the GMD $S$ is referred to as the graph shift operator).
Furthermore, spatially FIR graph filters satisfy the convolution theorem that indicates that a convolution in the vertex domain can be computed by means of an elementwise multiplication in the spectrum domain \citep{Sandryhaila2013-DSPG}.
Finally, in the context of finite graphs, it is observed that the space of FIR graph filters of the form \eqref{eq:optimalLinear}, characterized by $N$ filter taps, is equivalent to the space of spatially infinite impulse response (IIR) graph filters as well as autoregressive, moving average (ARMA) graph filters \citep{Isufi2017-ARMA}.

\section{Localized Approximations to the Optimal Distributed Linear Controller} \label{sec:approx}
In this section, we discuss several methods to approximate the optimal dense system response in the form of \eqref{eq:optimalLinear} with one that is localized and uses ${F \ll N}$ filter taps. We start with a projection method based on graph filter design. We then present a robust SLS formulation of the approximation problem that guarantees the stability of the resulting localized controller. Finally, we show how these two can be combined into a robust projection method that also ensures stability.  In the following, we define $\optresp := (\optresp_x, \optresp_u)$, and recall that $\optresp$ can be written as a graph filter \eqref{eq:optimalLinear} defined by transfer function filter taps $\boldsymbol\phi^\opt_k(z)$, $k=0,...,N-1$.  We further recall that each transfer function filter tap $\boldsymbol\phi^\opt_k(z)$ admits the following expansion in terms of its Markov parameters: $\boldsymbol\phi^\opt_k(z) = \sum_{i=1}^\infty z^{-i}\phi^\opt_k[i]$.

\subsection{Naive Projection}
We propose an approach inspired by the graph signal processing literature, wherein we exploit the graph filter structure of the optimal system responses \citep{Segarra2017-GraphFilterDesign}. More specifically, we project the optimal system responses $\optresp$ onto graph filters of order $F$ in the $\mathcal H_2$ norm by solving the following optimization problem
\begin{equation} \label{eq:minGraphFilter}
    \min_{\boldsymbol\phi} \htwonorm{ \resp - \optresp }^2.
\end{equation}
Here $\boldsymbol\phi := [\boldsymbol\phi_{0}(z),\ldots,\boldsymbol\phi_{F-1}(z)]^{\top} \in \fdC^{F}$ collects the $F$ transfer function filter taps defining $\resp := \sum_{k=0}^{F-1} \boldsymbol \phi_k(z)S^k$. If we further restrict each transfer function filter tap  $\phi_k(z)$ to be FIR of order $n$, i.e., if we write $\boldsymbol\phi_k(z) = \sum_{i=1}^n z^{-i}\phi_k[i]$, then this reduces to solving the following unconstrained quadratic program
\begin{equation}\label{eq:minGraphFilterFIR}
    \min_{\{\phi_k[i]\}} \sum_{i=1}^n \left\|\sum_{k=0}^{F-1}\phi_k[i]S^k - \sum_{l=0}^{N-1}\phi^\opt_l[i]S^l \right\|_F^2.
\end{equation}

\begin{proposition}[Approximating filter taps] \label{prop:filterApproximation}
If the eigenvalues $\{\lambda_{i}\}$ of the GMD $\mtS$ are all distinct, then the filter taps that solve \eqref{eq:minGraphFilter} are given by
\begin{equation}
    \phi_{k}[i] = \phi_{k}^{\opt}[i] + \vceps_k[i]
\end{equation}
where $\vceps$ is the error vector computed as
\begin{equation}
    \vceps[j] = \Big( \sum_{i=1}^{N} \vclambda_{iF} \vclambda_{iF}^{\Tr} \Big)^{-1} \Big( \sum_{i=1}^{N} \vclambda_{iF} \vclambda_{i(N-F)}^{\Tr} \Big) \boldsymbol\phi_{N-F}^{\opt}[j]
\end{equation}
with $\vclambda_{iF} := [1,\lambda_{i},\lambda_{i}^{2},\ldots,\lambda_{i}^{F-1}] \in \fdR^{F}$ is the collection of the first $F$ powers of $\lambda_{i}$, $\vclambda_{i(N-F)} := [\lambda_{i}^{F},\ldots,\lambda_{i}^{N-1}] \in \fdR^{N-F}$ is the collection of the remaining powers and $\boldsymbol\phi_{N-F}^{\opt}[i] := [\phi^\opt_{F}[i],\ldots,\phi^\opt_{N-1}[i]] \in \fdR^{N-F}$ collects the tail $N-F$ optimal filter taps.
\end{proposition}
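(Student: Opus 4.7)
The plan is to exploit the separability of the $\mathcal{H}_2$ norm across Markov parameters to reduce the problem to a sequence of finite-dimensional least-squares problems, then diagonalize via the common eigenbasis $V$ of $S$, and finally solve the resulting normal equations.

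First, I would note that the $\mathcal{H}_2$ norm of a stable, strictly proper transfer matrix equals the sum over Markov parameters $i\ge 1$ of the squared Frobenius norm of the $i$th matrix coefficient. Since each FIR filter tap $\boldsymbol\phi_k(z)=\sum_{i\ge 1} z^{-i}\phi_k[i]$ enters the response $\resp=\sum_{k=0}^{F-1}\boldsymbol\phi_k(z)S^k$ linearly and independently in each time index $i$, the optimization \eqref{eq:minGraphFilter} decouples into the independent problems \eqref{eq:minGraphFilterFIR} for each $i$:
\begin{equation*}
    \min_{\phi_0[i],\ldots,\phi_{F-1}[i]}\ \left\|\sum_{k=0}^{F-1}\phi_k[i]\,S^k-\sum_{l=0}^{N-1}\phi_l^\opt[i]\,S^l\right\|_F^2.
\end{equation*}

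Next, I would apply the eigendecomposition $S=V\Lambda_S V^\top$ (which exists and is orthogonal since $S$ is symmetric). Because powers of $S$ share the eigenbasis $V$, and the Frobenius norm is invariant under orthogonal conjugation, the inner objective becomes
\begin{equation*}
    \sum_{j=1}^{N}\left(\sum_{k=0}^{F-1}\phi_k[i]\lambda_j^k-\sum_{l=0}^{N-1}\phi_l^\opt[i]\lambda_j^l\right)^{2}.
\end{equation*}
Introducing the error vector $\vceps[i]:=\phi[i]-\phi_F^\opt[i]\in\fdR^F$ (where $\phi_F^\opt[i]$ collects the first $F$ optimal taps) and recognizing the Vandermonde-like rows $\vclambda_{jF}$ and $\vclambda_{j(N-F)}$ defined in the proposition, this rewrites as $\sum_{j=1}^N\bigl(\vclambda_{jF}^\top\vceps[i]-\vclambda_{j(N-F)}^\top\boldsymbol\phi_{N-F}^\opt[i]\bigr)^{2}$, which is an unconstrained quadratic in $\vceps[i]$.

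Setting the gradient to zero yields the normal equations
\begin{equation*}
    \Bigl(\sum_{j=1}^{N}\vclambda_{jF}\vclambda_{jF}^\top\Bigr)\vceps[i]=\Bigl(\sum_{j=1}^{N}\vclambda_{jF}\vclambda_{j(N-F)}^\top\Bigr)\boldsymbol\phi_{N-F}^\opt[i],
\end{equation*}
which is exactly the claimed expression once the Gram matrix on the left is inverted. Writing $\phi_k[i]=\phi_k^\opt[i]+\vceps_k[i]$ completes the identification.

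The main (and only nontrivial) obstacle is justifying invertibility of $G:=\sum_{j=1}^{N}\vclambda_{jF}\vclambda_{jF}^\top$. I would observe that $G=M^\top M$ where $M\in\fdR^{N\times F}$ has $j$th row $\vclambda_{jF}^\top=[1,\lambda_j,\ldots,\lambda_j^{F-1}]$. Since $M$ is an $N\times F$ Vandermonde matrix with $N\ge F$ (the FIR order of the approximant cannot exceed the ambient dimension) and the eigenvalues $\{\lambda_j\}$ are all distinct by hypothesis, any $F$ rows of $M$ form a nonsingular square Vandermonde block, so $M$ has full column rank $F$ and $G\succ 0$. This is the reason the distinct-eigenvalues hypothesis is needed. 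Everything else is a mechanical computation that I would not grind through in detail.
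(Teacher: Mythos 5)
Your proposal is correct and follows exactly the route the paper's one-line proof sketches (convexity of \eqref{eq:minGraphFilterFIR}, the normal equations from matrix calculus, and the Vandermonde full-column-rank argument for invertibility of the Gram matrix under distinct eigenvalues); the Markov-parameter decoupling and the orthogonal diagonalization by $V$ are precisely the intended reduction. Your derivation in fact supplies the details the paper omits, and the resulting normal equations reproduce the stated formula for $\vceps[j]$ exactly.
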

%
\begin{proof}
    It follows from using the convexity of \eqref{eq:minGraphFilterFIR}, matrix calculus and properties of Vandermonde matrices.
\end{proof}
%

Prop.~\ref{prop:filterApproximation} determines in closed-form how to compute the filter of order $F$ that best approximates the optimal linear distributed controller in the $\mathcal{H}_2$ norm. It also shows that each filter tap transfer function $\boldsymbol\phi_{k}(z)$ is obtained as the optimal tap $\boldsymbol\phi_{k}^{\opt}(z)$ with an added corrective term that accounts for the $N-F$ taps that could not be included.

This approach is easy to implement computationally, as it only requires solving a least squares problem to minimize the projection cost. However, the resulting controller cannot be guaranteed to be stabilizing. As we show later via numerical simulation, approximations with a small number of filter taps $F$ are often unstable. This motivates an approach that takes into account the stability of the resulting controller.

\subsection{Localized Approximations via Robust SLS}
Robust SLS \citep{matni2017scalable, anderson_system_2019} offers a systematic way to reason about \emph{approximate system responses}, i.e., system responses that do not exactly satisfy the achievability constraint \eqref{eq:achievability-constraint}. In particular, as shown in the following result, robust SLS allows for an explicit characterization of the effects of using approximate system responses for controller design.

\begin{theorem}[Corollary 4.4 of \cite{anderson_system_2019}]\label{thm:robustSLS}
Let $(\resp_x, \resp_u, \mathbf{\Delta})$ be a solution to
\begin{equation}\label{eq:robust-achievability-constraint}
    \begin{bmatrix} zI-A & -B  \end{bmatrix} \begin{bmatrix} \resp_x \\ \resp_u    \end{bmatrix} = I + \mathbf\Delta, \quad \resp_x, \resp_u \in \frac{1}{z}\rhinfty.
\end{equation}
Then if $\lVert \mathbf\Delta \rVert_{\mathcal H_\infty} < 1$ the controller $\mathbf{K} = \resp_u\resp_x^{-1}$ stabilizes the system \eqref{eq:linearDynamics}, and the actual system response that is achieved is given by
\[
\begin{bmatrix} \vcx \\ \vcu \end{bmatrix} = \begin{bmatrix} \resp_x \\ \resp_u \end{bmatrix} (I + \mathbf\Delta)^{-1} \vcw.
\]
\end{theorem}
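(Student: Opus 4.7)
The plan is to proceed in three short stages: first identify the closed-loop map implemented by $\mathbf K=\resp_u\resp_x^{-1}$, then use small-gain to guarantee that this map is well-defined and stable, and finally check that all four internal closed-loop maps inherit stability so that $\mathbf K$ is internally stabilizing.

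\textbf{Stage 1 (algebraic identification of the response).} I would start from the perturbed achievability condition \eqref{eq:robust-achievability-constraint} and substitute $\resp_u=\mathbf K\,\resp_x$ to obtain
\[
(zI-A-B\mathbf K)\,\resp_x \;=\; I+\mathbf\Delta.
\]
Since $\|\mathbf\Delta\|_{\mathcal H_\infty}<1$, the small-gain theorem gives that $(I+\mathbf\Delta)$ has a bounded inverse in $\rhinfty$ (Neumann series $\sum_{k\ge 0}(-\mathbf\Delta)^k$ converges in the $\mathcal H_\infty$ norm), and therefore
\[
(zI-A-B\mathbf K)^{-1} \;=\; \resp_x(I+\mathbf\Delta)^{-1}.
\]
Plugging this into the closed-loop equation $(zI-A-B\mathbf K)\vcx=\vcw$ yields $\vcx=\resp_x(I+\mathbf\Delta)^{-1}\vcw$, and applying $\mathbf K$ gives $\vcu=\resp_u(I+\mathbf\Delta)^{-1}\vcw$, which is exactly the claimed system response.

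\textbf{Stage 2 (stability of the nominal map).} Note that from \eqref{eq:robust-achievability-constraint} we have $\mathbf\Delta=(zI-A)\resp_x-B\resp_u-I$; since $\resp_x,\resp_u\in\frac{1}{z}\rhinfty$, the term $z\resp_x$ is in $\rhinfty$, so $\mathbf\Delta\in\rhinfty$. Combined with $\|\mathbf\Delta\|_{\mathcal H_\infty}<1$, the inverse $(I+\mathbf\Delta)^{-1}\in\rhinfty$. Hence $\resp_x(I+\mathbf\Delta)^{-1}$ and $\resp_u(I+\mathbf\Delta)^{-1}$ both lie in $\frac{1}{z}\rhinfty$, showing that the maps $\vcw\mapsto\vcx$ and $\vcw\mapsto\vcu$ are stable.

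\textbf{Stage 3 (internal stability).} To upgrade input--output stability into internal stability, I would also consider disturbances $\vctw$ injected at the input and check the four closed-loop maps from $(\vcw,\vctw)$ to $(\vcx,\vcu)$. Each such map factors as $\resp_\star(I+\mathbf\Delta)^{-1}$ possibly composed with $B$ or with $\mathbf K$, but crucially one can re-express the $\mathbf K$-containing maps using $\mathbf K\resp_x=\resp_u$, so that every internal map ultimately appears as a product of an element of $\frac{1}{z}\rhinfty$ and the stable factor $(I+\mathbf\Delta)^{-1}\in\rhinfty$. This verifies all four transfer matrices lie in $\rhinfty$, which is the standard definition of internal stability, and the proof is complete.

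\textbf{Main obstacle.} The only nontrivial point is the small-gain argument that $(I+\mathbf\Delta)^{-1}\in\rhinfty$; everything else is a direct algebraic manipulation and a bookkeeping check of the four internal maps. This step is standard but it is where the strict inequality $\|\mathbf\Delta\|_{\mathcal H_\infty}<1$ is essential, and it is the one place where the hypotheses must be invoked in full.
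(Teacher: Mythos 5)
The paper does not actually prove this statement; it imports it verbatim as Corollary 4.4 of the cited SLS reference, so there is no in-paper proof to match against. Your argument is essentially correct, and it is a direct verification rather than the one-line reduction used in the cited source. The standard proof observes that $\hat{\resp}_x := \resp_x(I+\mathbf\Delta)^{-1}$ and $\hat{\resp}_u := \resp_u(I+\mathbf\Delta)^{-1}$ satisfy the \emph{exact} achievability constraint $(zI-A)\hat{\resp}_x - B\hat{\resp}_u = I$ with $\hat{\resp}_x,\hat{\resp}_u \in \frac{1}{z}\rhinfty$ (using precisely your small-gain fact that $(I+\mathbf\Delta)^{-1}\in\rhinfty$), and that $\hat{\resp}_u\hat{\resp}_x^{-1} = \resp_u\resp_x^{-1} = \mathbf K$; Theorem~\ref{thm:SLS} then delivers internal stability and the claimed response in one step. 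That reduction buys you Stage 3 for free, which is exactly the place where your write-up is thinnest: your "four internal maps" bookkeeping establishes input--output stability of the loop with $\mathbf K$ treated as an abstract transfer matrix, but internal stability of the actual controller realization (no unstable cancellations in forming $\resp_u\resp_x^{-1}$) is what Theorem~\ref{thm:SLS} certifies, and it is cleaner to inherit it than to re-derive it. One more small point worth a sentence in a polished version: you should justify that $\resp_x$ is invertible as a rational matrix before writing $\mathbf K = \resp_u\resp_x^{-1}$; this follows because $z\resp_x \to I + \mathbf\Delta(\infty)$ as $z\to\infty$ and $\lVert\mathbf\Delta(\infty)\rVert < 1$. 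Stages 1 and 2 are correct as written.
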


We leverage this result to provide an upper-bound on the amount of truncation that can be applied to $\optresp$ without destabilizing the system.

\begin{corollary}\label{cor:robust-cost-bound}
Let $(\resp_x, \resp_u, \mathbf{\Delta})$ be a solution to \eqref{eq:robust-achievability-constraint} and assume that $\hinfnorm{\mathbf\Delta} < 1$. Then the controller $\mtK = \resp_u\resp_x^{-1}$ achieves an LQR cost \eqref{eq:SLSOptimization} $J$ that can be bounded as
\begin{equation*}
    J(\mtK) \leq \frac{1}{1-\hinfnorm{\mathbf\Delta}} \htwonorm{\begin{bmatrix} Q^{1/2} & 0 \\ 0 & R^{1/2} \end{bmatrix} \begin{bmatrix} \resp_x \\ \resp_u \end{bmatrix}}.
\end{equation*}
\end{corollary}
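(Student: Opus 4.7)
The plan is to combine Theorem~\ref{thm:robustSLS} with standard submultiplicativity of the $\mathcal{H}_2/\mathcal{H}_\infty$ norms, and then control $(I+\mathbf\Delta)^{-1}$ via a Neumann series bound in $\mathcal{H}_\infty$.

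First, I would invoke Theorem~\ref{thm:robustSLS} directly. Since $\hinfnorm{\mathbf\Delta}<1$, the theorem guarantees that $\mathbf K = \resp_u\resp_x^{-1}$ internally stabilizes \eqref{eq:linearDynamics}, and that the closed-loop response from $\vcw$ to $(\vcx,\vcu)$ under $\mathbf K$ is exactly
\[
\begin{bmatrix}\vcx \\ \vcu \end{bmatrix} \;=\; \begin{bmatrix}\resp_x \\ \resp_u \end{bmatrix}(I+\mathbf\Delta)^{-1}\vcw.
\]
This is the key step that lets me replace the implicit $\mathbf K$ by an explicit transfer-matrix expression in $(\resp_x,\resp_u,\mathbf\Delta)$.

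Next, I would substitute this into the cost. With $w(t)\sim\mathcal N(0,I)$ i.i.d., the LQR objective in \eqref{eq:quadraticCost}–\eqref{eq:SLSOptimization} is (the square of) the $\mathcal{H}_2$ norm of the map from $\vcw$ to the weighted output $[\,Q^{1/2}\vcx\,;\,R^{1/2}\vcu\,]$. So interpreting $J$ on the left-hand side as in the statement (i.e.\ either as the square root of the quadratic cost in \eqref{eq:quadraticCost}, or equivalently as the $\mathcal{H}_2$ norm of the weighted closed-loop response), I can write
\[
J(\mathbf K) \;=\; \htwonorm{\begin{bmatrix}Q^{1/2} & 0 \\ 0 & R^{1/2}\end{bmatrix}\begin{bmatrix}\resp_x \\ \resp_u \end{bmatrix}(I+\mathbf\Delta)^{-1}}.
\]

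The third step is to apply the standard mixed-norm inequality $\htwonorm{\mathbf G\mathbf H} \le \htwonorm{\mathbf G}\,\hinfnorm{\mathbf H}$, which gives
\[
J(\mathbf K) \;\le\; \htwonorm{\begin{bmatrix}Q^{1/2} & 0 \\ 0 & R^{1/2}\end{bmatrix}\begin{bmatrix}\resp_x \\ \resp_u \end{bmatrix}}\;\hinfnorm{(I+\mathbf\Delta)^{-1}}.
\]

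Finally, I would bound $\hinfnorm{(I+\mathbf\Delta)^{-1}}$ using the Neumann series expansion: since $\hinfnorm{\mathbf\Delta}<1$, the series $\sum_{k\ge 0}(-\mathbf\Delta)^k$ converges in $\mathcal{H}_\infty$ to $(I+\mathbf\Delta)^{-1}$, and the triangle inequality with submultiplicativity yields
\[
\hinfnorm{(I+\mathbf\Delta)^{-1}} \;\le\; \sum_{k=0}^{\infty} \hinfnorm{\mathbf\Delta}^{k} \;=\; \frac{1}{1-\hinfnorm{\mathbf\Delta}}.
\]
Plugging this in gives exactly the claimed bound.

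The only real subtlety is the first inequality $\htwonorm{\mathbf G\mathbf H}\le \htwonorm{\mathbf G}\hinfnorm{\mathbf H}$; although standard, it requires both factors to lie in the appropriate Hardy spaces. Here $\mathbf G=\operatorname{diag}(Q^{1/2},R^{1/2})[\resp_x;\resp_u]\in\tfrac{1}{z}\rhinfty$ by the SLS constraint \eqref{eq:robust-achievability-constraint}, and $\mathbf H=(I+\mathbf\Delta)^{-1}\in\rhinfty$ by the Neumann series argument above, so the inequality applies. Everything else is a direct chaining of the previous steps, so I do not anticipate any further obstacles.
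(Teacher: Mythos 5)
Your proposal is correct and follows essentially the same route as the paper's proof: invoke Theorem~\ref{thm:robustSLS} to get the achieved response $[\resp_x;\resp_u](I+\mathbf\Delta)^{-1}$, peel off $(I+\mathbf\Delta)^{-1}$ via the mixed $\mathcal H_2$/$\mathcal H_\infty$ submultiplicativity bound, and then bound $\hinfnorm{(I+\mathbf\Delta)^{-1}}$ by $1/(1-\hinfnorm{\mathbf\Delta})$. The only cosmetic difference is that you justify the last step by a Neumann series while the paper cites the small gain theorem, and your remark that the mixed-norm step is submultiplicativity rather than literally ``Cauchy--Schwarz'' is if anything a more accurate attribution than the paper's.
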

\begin{proof}
First, we note that by Theorem \ref{thm:robustSLS}, the system responses $(\tilderesp_x, \tilderesp_u)$ achieved by $\mtK$ are given by
\[\begin{bmatrix} \tilderesp_x \\ \tilderesp_u \end{bmatrix} = \begin{bmatrix} \resp_x \\ \resp_u \end{bmatrix} (I + \mathbf\Delta)^{-1}.\]
Thus, the cost achieved by the controller $\mtK$ is bounded by
\begin{align*}
    J(\mtK) &= J(\tilderesp_x, \tilderesp_u)\\
    &= \htwonorm{\begin{bmatrix} Q^{1/2} & 0 \\ 0 & R^{1/2} \end{bmatrix} \begin{bmatrix} \resp_x \\ \resp_u \end{bmatrix} (I + \mathbf\Delta)^{-1}}\\
    &\leq \hinfnorm{(I + \mathbf\Delta)^{-1}} \htwonorm{\begin{bmatrix} Q^{1/2} & 0 \\ 0 & R^{1/2} \end{bmatrix} \begin{bmatrix} \resp_x \\ \resp_u \end{bmatrix}}\\
    &\leq \frac{1}{1-\hinfnorm{\mathbf\Delta}} \htwonorm{\begin{bmatrix} Q^{1/2} & 0 \\ 0 & R^{1/2} \end{bmatrix} \begin{bmatrix} \resp_x \\ \resp_u \end{bmatrix}},
\end{align*}
where we used Cauchy-Schwarz in the first inequality and the small gain theorem and the fact that $\hinfnorm{\mathbf\Delta} < 1$ in the last step.
\end{proof}
Corollary \ref{cor:robust-cost-bound} offers a way to synthesize \textit{stable} truncated system responses. 
Specifically, to synthesize a system response that uses only $F < N-1$ filter taps while guaranteeing both stability and performance, we propose the following optimization problem:
\begin{subequations} \label{eq:robustSLSOptimization}
\begin{align}
\begin{split}
    \underset{\boldsymbol\phi_{x}, \boldsymbol\phi_{u}, \gamma\in(0,1)}{\text{minimize}} &\ \quad 
    \frac{1}{1-\gamma} \htwonorm{ \begin{bmatrix}  Q^{1/2} & 0 \\ 0 & R^{1/2}    \end{bmatrix}
    \begin{bmatrix}     \resp_x \\ \resp_u    \end{bmatrix} }
\end{split}\\
\begin{split}\label{eq:robustConstraint}
    \text{s.t. } & \quad
    \begin{bmatrix}    zI - A & -B   \end{bmatrix}
    \begin{bmatrix}     \resp_x \\ \resp_u    \end{bmatrix} = I + \mathbf\Delta, \\
    &\ \hinfnorm{\mathbf\Delta} \leq \gamma,\\
    &\ \resp_x(z) = \sum_{k=0}^{F-1} \boldsymbol\phi_{x,k}(z) S^k, \, \boldsymbol\phi_{x,k}(z) \in \frac{1}{z}\mathcal{RH}_\infty,\\
    &\ \resp_u(z) = \sum_{k=0}^{F-1} \boldsymbol\phi_{u,k}(z) S^k, \, \boldsymbol\phi_{u,k}(z) \in \frac{1}{z}\mathcal{RH}_\infty.
\end{split}
\end{align}
\end{subequations}

By Corollary \ref{cor:robust-cost-bound}, the solution to \eqref{eq:robustSLSOptimization} defines a controller that is stabilizing.  We further show in the next result that it enjoys guaranteed suboptimality bounds relative to the optimal controller defined by $\optresp$.  We first introduce the following notation: for a system response of the form $\resp = \sum_{k=0}^{N-1} \boldsymbol\phi_k(z)S^k$, we define the $F$-truncation $P_F(\resp)$ and the $F$-tail $P_{F\perp}(\resp)$ as
\begin{align*}
    P_F(\resp) := \sum_{k=0}^{F-1} \boldsymbol\phi_k(z)S^k,\
    P_{F\perp}(\resp) := \sum_{k=F}^{N-1} \boldsymbol\phi_k(z)S^k.
\end{align*}
\begin{theorem}
Let $(\tilderesp_x, \tilderesp_u, \tilde{\gamma})$ be the optimal solution to the robust SLS problem \eqref{eq:robustSLSOptimization}. Let $(\optresp_x, \optresp_u)$ be the optimal solution to the untruncated SLS problem \eqref{eq:SLSOptimization}. Suppose that
\[
\hinfnorm{\mathbf\Delta^*} := \hinfnorm{ (zI - A)P_{F\perp}(\optresp_x) - B P_{F\perp}(\optresp_u) } < 1.
\]
Then, the controller $\mtK = \tilderesp_u \tilderesp_x^{-1}$ is stabilizing and
\begin{equation}\label{eq:suboptimalityBound}
    J(\mtK) \leq \frac{1}{1 - \hinfnorm{\mathbf\Delta^*}} \Big( J(\optresp_x, \optresp_u) + J(P_{F\perp}(\optresp_x), P_{F\perp}(\optresp_u))    \Big)
\end{equation}
\end{theorem}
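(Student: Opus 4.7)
The plan is to use the $F$-truncation of the unconstrained optimal $(\optresp_x, \optresp_u)$ as a feasible certificate for the robust SLS program \eqref{eq:robustSLSOptimization}, and then combine the robustness bound of Corollary \ref{cor:robust-cost-bound} with an optimality argument and a triangle inequality to recover \eqref{eq:suboptimalityBound}.

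First, I would verify that $(P_F(\optresp_x), P_F(\optresp_u), -\mathbf\Delta^*)$ is feasible for \eqref{eq:robustSLSOptimization} with $\gamma = \hinfnorm{\mathbf\Delta^*}$. The key algebraic identity is that, since $(\optresp_x, \optresp_u)$ satisfies the untruncated achievability constraint $[zI-A,-B][\optresp_x;\optresp_u]=I$ and $\optresp = P_F(\optresp) + P_{F\perp}(\optresp)$, direct substitution yields
\[
\begin{bmatrix} zI - A & -B \end{bmatrix}\begin{bmatrix} P_F(\optresp_x) \\ P_F(\optresp_u)\end{bmatrix} = I - \mathbf\Delta^*,
\]
so the robust achievability constraint \eqref{eq:robustConstraint} is satisfied with error $\mathbf\Delta = -\mathbf\Delta^*$ whose $\mathcal H_\infty$ norm is strictly less than one by assumption. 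It remains to check that each retained filter tap lies in $\frac{1}{z}\mathcal{RH}_\infty$. By Theorem \ref{thm:optimalLinear}, $\mtLambda^\opt_x = V^\top \optresp_x V$ is element-wise in $\frac{1}{z}\mathcal{RH}_\infty$, and each filter tap is a fixed linear combination of the diagonal entries of $\mtLambda^\opt_x$ via a Vandermonde inverse (assuming distinct eigenvalues of $S$, as in Proposition \ref{prop:filterApproximation}); thus stability of the aggregate response is inherited by the individual taps.

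Second, by optimality of $(\tilderesp_x, \tilderesp_u, \tilde\gamma)$ and feasibility of the truncated solution,
\[
\frac{1}{1-\tilde\gamma}\htwonorm{\begin{bmatrix} Q^{1/2} & 0 \\ 0 & R^{1/2}\end{bmatrix}\begin{bmatrix}\tilderesp_x \\ \tilderesp_u\end{bmatrix}} \leq \frac{1}{1-\hinfnorm{\mathbf\Delta^*}}\htwonorm{\begin{bmatrix} Q^{1/2} & 0 \\ 0 & R^{1/2}\end{bmatrix}\begin{bmatrix}P_F(\optresp_x) \\ P_F(\optresp_u)\end{bmatrix}}.
\]
Corollary \ref{cor:robust-cost-bound} applied at the robust SLS optimum ensures both that $\mtK = \tilderesp_u\tilderesp_x^{-1}$ is stabilizing (since $\tilde\gamma < 1$) and that $J(\mtK)$ is bounded from above by the left-hand side of the display. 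Finally, since $P_F(\optresp) = \optresp - P_{F\perp}(\optresp)$, the triangle inequality for the $\mathcal H_2$ norm yields
\[
\htwonorm{\begin{bmatrix} Q^{1/2} & 0 \\ 0 & R^{1/2}\end{bmatrix}\begin{bmatrix}P_F(\optresp_x) \\ P_F(\optresp_u)\end{bmatrix}} \leq J(\optresp_x, \optresp_u) + J(P_{F\perp}(\optresp_x), P_{F\perp}(\optresp_u)),
\]
and chaining these three inequalities produces \eqref{eq:suboptimalityBound}.

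The step I anticipate needing the most care is the feasibility argument, specifically confirming that each individually retained filter tap, and not merely the aggregate $P_F(\optresp)$, lies in $\frac{1}{z}\mathcal{RH}_\infty$; the spectral/Vandermonde structure afforded by Theorem \ref{thm:optimalLinear} is what makes this work. Once feasibility is secured, the rest of the proof is a clean chaining of optimality, Corollary \ref{cor:robust-cost-bound}, and the triangle inequality.
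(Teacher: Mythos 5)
Your proposal is correct and follows essentially the same route as the paper's proof: exhibit $(P_F(\optresp_x), P_F(\optresp_u))$ as a feasible point of \eqref{eq:robustSLSOptimization} with $\gamma = \hinfnorm{\mathbf\Delta^*}$ via the identity $[\,zI-A \;\; -B\,][P_F(\optresp_x); P_F(\optresp_u)] = I - \mathbf\Delta^*$, invoke optimality of $(\tilderesp_x,\tilderesp_u,\tilde\gamma)$ together with Corollary \ref{cor:robust-cost-bound}, and finish with the triangle inequality applied to $P_F(\optresp) = \optresp - P_{F\perp}(\optresp)$. Your additional check that each retained filter tap individually lies in $\frac{1}{z}\mathcal{RH}_\infty$ is a detail the paper leaves implicit, but it does not change the argument.
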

\begin{proof}
By the constraints \eqref{eq:robustConstraint} and Theorem \ref{thm:robustSLS}, we immediately have that $\mtK$ is stabilizing. To show the given suboptimality bound, we first note that there exist some $\gamma$ such that $(P_F(\optresp_x), P_F(\optresp_u), \gamma)$ is a feasible solution to the robust optimization problem \eqref{eq:robustSLSOptimization}. To see this, observe that
\begin{align*}
    & \begin{bmatrix} zI-A & -B \end{bmatrix} \begin{bmatrix} P_F(\optresp_x) \\ P_F(\optresp_u) \end{bmatrix} \\
    =\ &\begin{bmatrix} zI-A & -B \end{bmatrix} \begin{bmatrix} \optresp_x \\ \optresp_u \end{bmatrix} - \begin{bmatrix} zI-A & -B \end{bmatrix} \begin{bmatrix} P_{F\perp}(\optresp_x) \\ P_{F\perp}(\optresp_u) \end{bmatrix}\\
    =\ & I - \mathbf\Delta^*,
\end{align*}
where the last step follows from the achievability of the optimal response $(\optresp_x, \optresp_u)$ and the definition of $\mathbf\Delta^*$. Thus, $(P_F(\optresp_x), P_F(\optresp_u), \hinfnorm{\mathbf\Delta^*})$ is a feasible solution with our assumption that $\hinfnorm{\mathbf\Delta^*} < 1$. Denote the robust SLS objective \eqref{eq:robustSLSOptimization} as $\tilde{J}(\resp_x, \resp_u, \gamma)$. By the optimality of the solution $(\tilderesp_x, \tilderesp_u, \tilde{\gamma})$, we have that
\begin{align*}
 \tilde{J}(\tilderesp_x, &\tilderesp_u, \tilde{\gamma}) \leq \tilde{J}(\optresp_x, \optresp_u, \hinfnorm{\mathbf\Delta^*}) \\
 &\leq \frac{1}{1-\|\mathbf\Delta^*\|_{H_\infty}} \htwonorm{ \begin{bmatrix}  Q^{1/2} & 0 \\ 0 & R^{1/2}    \end{bmatrix}    \begin{bmatrix}  P_F(\optresp_x) \\ P_F(\optresp_u)\end{bmatrix}},
\end{align*}
where we applied Corollary \ref{cor:robust-cost-bound} in the second inequality. The desired result follows then from the fact that
\[
\begin{bmatrix}  P_F(\optresp_x) \\ P_F(\optresp_u)\end{bmatrix} = \begin{bmatrix}  \optresp_x \\ \optresp_u\end{bmatrix} - \begin{bmatrix}  P_{F\perp}(\optresp_x) \\ P_{F\perp}(\optresp_u)\end{bmatrix}
\]
and an application of the triangle inequality.
\end{proof}
This optimization problem is jointly quasi-convex and can be solved efficiently using bisection. Further, feasibility provides a stability certificate in the form of $\hinfnorm{\mathbf\Delta}<1$.

\subsection{Robust Projection}
Lastly, we can combine the robustness constraints used in robust SLS with the signal-processing-based projection method. Specifically, we solve the following optimization problem
\begin{equation} \label{eq:robustProjection}
\begin{aligned}
    \underset{\boldsymbol\Phi_x , \boldsymbol\Phi_{u}, \gamma\in(0,1)}{\text{minimize}} &\ \quad
    \htwonorm{\resp - \resp^*}^2 \\
    \text{s.t. } &\ \quad (\boldsymbol\Phi_x, \boldsymbol\Phi_u, \gamma)\;\text{satisfy constraint \eqref{eq:robustConstraint}}.
\end{aligned}
\end{equation}
We note that solving this problem does not give an upper bound on the cost of the resulting controller, but the robustness constraint ensures that the resulting controller is stabilizing.

\subsection{Implementation}
We end this section by detailing practical implementation details for optimization problems \eqref{eq:robustSLSOptimization} and \eqref{eq:robustProjection}. First, we note that computationally, one cannot directly optimize for the IIR system responses as is written in \eqref{eq:robustSLSOptimization}, \eqref{eq:robustProjection}. In practice, we use an FIR approximation of the strictly proper transfer functions $\boldsymbol\phi_x(z)$ and $\boldsymbol\phi_u(z)$, i.e., $\boldsymbol\phi(z)$ is parameterized as
\[
\boldsymbol\phi(z) = \sum_{i=1}^{n} z^{-i}\phi[i]
\]
for some given FIR order $n$. As shown in \cite{anderson_system_2019}, the suboptimality incurred by such an FIR approximation decays exponentially in the horizon $n$.

The $\mathcal H_\infty$-norm constraints on the (also FIR) transfer matrix $\mathbf\Delta$ can then be enforced via semidefinite programming (see Theorem 5.8 in \cite{dumitrescu2007positive}), potentially introducing a nontrivial computational burden. However, we note that one can replace the $\mathcal H_\infty$-norm constraint in optimization problems \eqref{eq:robustSLSOptimization} and \eqref{eq:robustProjection} with any induced norm constraint.  A particularly appealing option is the $\ell_1 \to \ell_1$ induced norm (which defines the $\mathcal{L}_1$-norm of the transpose system), as this norm decomposes columnwise.  As shown in \cite{anderson_system_2019, wang_separable_2018}, the resulting robustness constraints are linear and embarrassingly parallelizable. We defer this extension to future work.
\section{Numerical Experiments} \label{sec:sims}
We show that our approach offers a principled way to trade off performance and communication complexity through numerical experiments. We also demonstrate the importance of the robustness constraints in synthesizing stable distributed controllers and compare the performance of robust SLS and projection-based methods on synthesizing localized controllers. All code needed to reproduce the examples found in this section is available at \url{https://github.com/unstable-zeros/graph-symmetric-systems}.

\subsection{Setup}
In the following experiments, we consider the distributed linear quadratic regulator (LQR) problem \eqref{eq:LQG} over $N=10$ scalar subsystems. We generate the GMD $S$ and dynamic matrices $A$ and $B$ using a process similar to that in \cite{Gama2022-DistributedLQR}. To generate a problem instance, we start by creating the communication network $\mathcal{G}$ by randomly sampling $N$ numbers $\{u_i\}_{i=1}^N\sim{}U[0,1]$, and creating a bi-directional link between $v_i$ and each of its $3$ nearest points as defined by the topology on the interval $[0,1]$ under the metric $d(v_i,v_j)=|u_i-u_j|$. We then take $S$ to be a symmetric matrix that shares the sparsity pattern of the Laplacian of $\mathcal G$, with its entry values sampled independently from $\mathcal N(0,1)$. The GMD $S$ is then normalized to have a spectral radius of $1$. We generate the dynamics matrices $A$ and $B$ to share the same eigenvectors as $S$, and sample their eigenvalues i.i.d.\ from the standard normal distribution -- hence both $A$ and $B$ are symmetric matrices. We take the cost matrices $Q = R = I_{N}$. For both of the following experiments, we randomly generate $50$ problem instances using this process. We end by noting that $\mathcal{G}$ generated this way have, on average, a diameter of $5.92$ hops.

For the implementation of the optimization problems, we approximate the transfer functions $\boldsymbol\phi(z)$ with an FIR horizon of $n=10$. Further, for the robust SLS problem \eqref{eq:robustSLSOptimization}, instead of using bisection to determine the best value of $\gamma$, we fix $\gamma=0.98$, as empirically the value of $\gamma$ does not significantly affect the cost achieved by the controllers.

\begin{figure}[htbp]
\centerline{\includegraphics[width=0.5\textwidth]{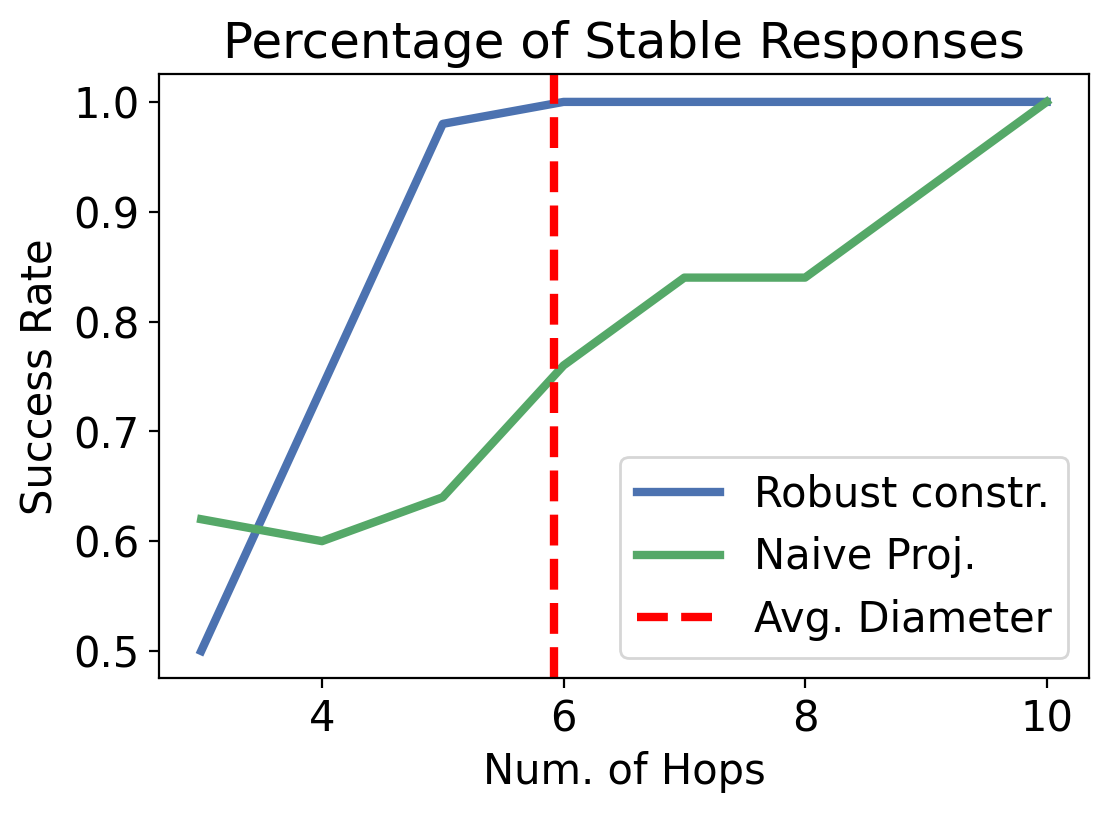}}
\caption{The percentage of synthesized $F$-hop controllers that are stabilizing using naive projection and robust SLS-based synthesis across 50 random trials.  The dashed red line denotes the average graph diameter of systems.}
\label{fig:trunc10-successrate}
\end{figure}

\subsection{Importance of Stability Constraints}
In this experiment, we demonstrate the importance of the robust SLS-based stability constraints in synthesizing stable distributed controllers. We vary the number of allowed filter taps and apply both the naive projection \eqref{eq:minGraphFilter} and robust SLS \eqref{eq:robustProjection} methods to $50$ randomly generated problem instances. For the naive projection method, we report the percentage of resulting controllers that are stable. For robust SLS, we report the percentage of optimization problems \eqref{eq:robustSLSOptimization} being feasible, as feasible solutions optimization problem \eqref{eq:robustSLSOptimization} are guaranteed to be stabilizing. The results are shown in Figure~\ref{fig:trunc10-successrate}.

First, we observe that as expected, a higher number of filter taps result in a higher probability of synthesizing stable responses for both methods. However, the naive projection method has nonzero probability of resulting in an unstable controllers even with a large number of filter taps i.e., even when the projection error between the \textit{stable} optimal system responses and the projected responses is small. 
On the other hand, the percentage of stable solutions resulting from the robust SLS problem increases monotonically with the number of filter taps, which is expected for a principled way of synthesizing stable controllers. Robust SLS also generally achieves a higher percentage of certifiably stable responses than that of naive projection, except for in the extremely sparse case of $F=3$. This suggests that the robust constraint might be too restrictive for synthesizing extremely sparse responses. Combined with the low computation cost of naive projection, this suggests a potential benefit of applying both methods in the sparse regime.

\begin{figure}[htbp]
\centerline{\includegraphics[width=0.5\textwidth]{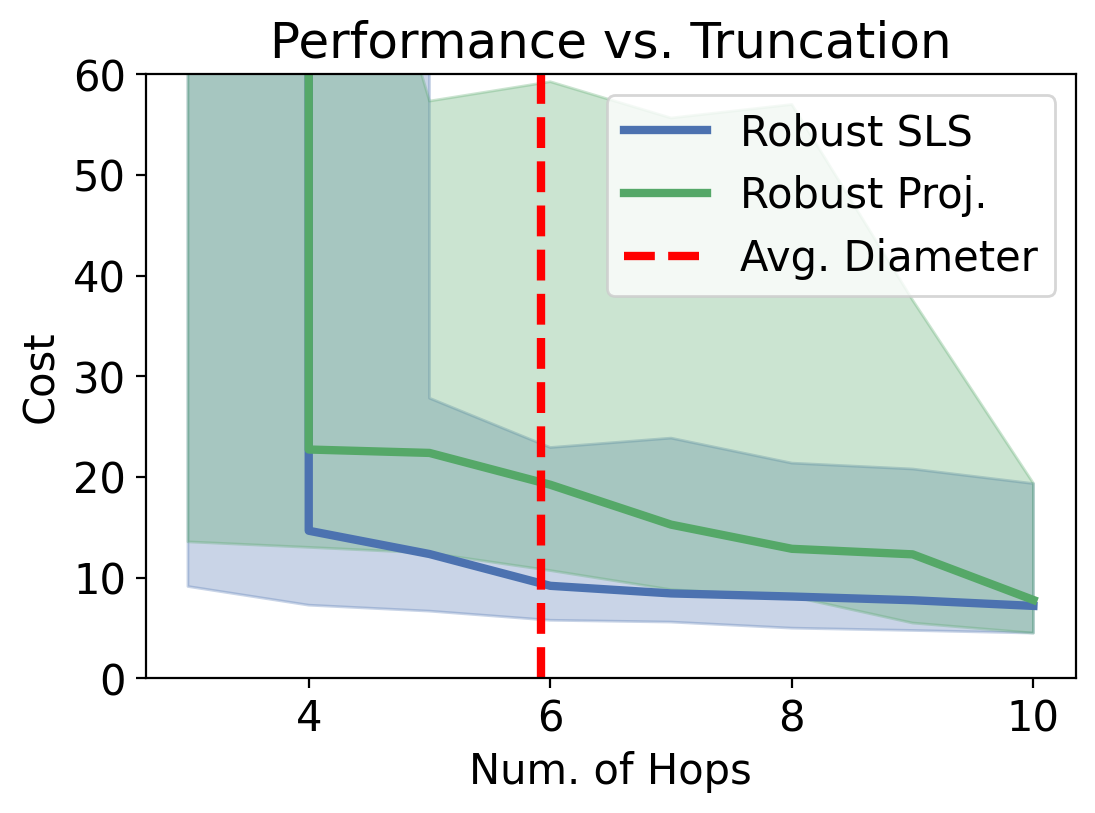}}
\caption{Cost achieved by $F$-hop controllers across 50 randomly generated systems. Solid lines denote the median cost achieved by each method, and the shaded regions show the 25-th and 75-th percentile of the costs.}
\label{fig:trunc10-perf}
\end{figure}

\subsection{Truncation Performance}
In this experiment, we compare the performance of robust SLS and robust projection for different filter tap numbers $F$ on the same $50$ randomly generated problem instances. We show the median (solid lines), $25$-th and $75$-th percentile (shaded regions) of the costs achieved by both methods in Figure~\ref{fig:trunc10-perf}.

First, we note that the median costs decreases monotonically for both methods as the number of hops increase. This shows that that the optimization problems can leverage the increase in expressivity of the graph filters to achieve better performance, which matches our intuition. 
Second, we note that the robust SLS-based method achieves a lower cost than robust Projection over for all numbers of filter taps considered. We also note that to the left of $4$ hops, the upper boundary of the shaded region, which represents the $75$-th percentile of the cost, is infinite, indicating that at least $25\%$ of the robust synthesis problems are infeasible. This again suggests a need to develop more flexible methods in the sparse regime.
\section{Conclusion}
In this works, we introduced the notion of graph symmetric systems and showed that for linear quadratic problems, the optimal system response for graph symmetric systems can be written as (potentially dense) graph filters. We then proposed three methods to approximate the optimal responses with localized responses and validated their performance in numerical simulation. Directions of future work include relaxing the GSS constraints, applying the results on $\mathcal L_1$ norm to enable distributed computation, and understanding how this can better inform GNN-based controllers with nonlinear activation functions.






\bibliographystyle{unsrtnat}
\bibliography{bibFiles/myIEEEabrv,bibFiles/biblioDistributedLinear}

\appendix
\subsection{Proof for Theorem \ref{thm:optimalLinear}}
We show that for any optimal system response $\optresp_x$, $\optresp_u$ of the optimization problem \eqref{eq:SLSOptimization} that is not diagonalizable with $V$, i.e.,
\begin{equation*}
    \begin{aligned}
    \mathbf\Lambda^\opt_x &= V^\top \optresp_x V, \quad
    \mathbf\Lambda^\opt_u  = V^\top \optresp_u V
    \end{aligned}
\end{equation*}
are not diagonal, we can construct a simultaneously diagonalizable system response $\resp_x'$, $\resp_u'$ that is equally optimal.  In particular, we construct such a system response as follows:
%
\begin{gather}
    \resp_x' = V \mathbf\Lambda_x' V^{\top}, \quad
     \resp_u' = V \mathbf\Lambda_u' V^{\top}\\
    [\mathbf\Lambda_x']_{ij} =
    \begin{cases}
        [\mathbf\Lambda^\opt_x]_{ij} & i=j \\
        0, & \text{o.w.}
    \end{cases},\;
    [\mathbf\Lambda_u']_{ij} =
    \begin{cases}
        [\mathbf\Lambda^\opt_u]_{ij} & i=j \\
        0 & \text{o.w.}
    \end{cases}.\label{eq:diag_phi}
\end{gather}

We first show that $\resp_x'$, $\resp_u'$ are feasible solutions of \eqref{eq:SLSOptimization}. From the achievability condition on $\optresp_x, \optresp_u$, we have that
\begin{equation}
\begin{aligned}
    (zI - A) \optresp_x - B \optresp_u &= I.
\end{aligned}
\end{equation}
Using the simultaneous diagonalizability of $A$ and $B$, we have
\begin{equation}
    V (zI - \Lambda_{A}) V^{\top} V \mtLambda_x^\opt V^{\top} -
    V \Lambda_{B} V^{\top} V \mtLambda_u^\opt V^{\top} = I
    \implies (zI - \Lambda_{A}) \mtLambda^\opt_x - \Lambda_{B} \mtLambda^\opt_u = I.
\end{equation}
Since the matrices $(zI - \Lambda_{A})$, $\Lambda_{B}$ and $I$ are diagonal, we have
\begin{equation}
    (zI - \Lambda_{A}) \mtLambda_x' - \Lambda_{B} \mtLambda_u'
    = \diag\Big[(zI - \Lambda_{A}) \mtLambda^\opt_x - \Lambda_{B} \mtLambda^\opt_u\Big]
    = I,
\end{equation}
where $\diag[\mtM]$ projects a matrix $\mtM$ onto its diagonal elements. Therefore, $\resp_x'$, $\resp_u'$ is also feasible.

Now, we show that $\resp_x'$, $\resp_u'$ gives a cost at least as good as that of $\optresp_x, \optresp_u$. By the simultaneous diagonalizability of the matrices $Q$ and $R$, and the fact that the $\mathcal H_2$-norm is invariant under unitary transformations, we have that
\begin{equation}\label{eq:spectral-cost}
\begin{aligned}
    &\htwonorm{Q^{1/2}\resp_x'}^2 + \htwonorm{R^{1/2}\resp_u'}^2 \\
    =\;& \htwonorm{V \Lambda_Q^{1/2} V^\top V\mtLambda_x' V^\top} + \htwonorm{V \Lambda_R^{1/2} V^\top V\mtLambda_u' V^\top}\\
    =\;& \htwonorm{\Lambda_{Q}^{1/2} \mtLambda_x'}^{2} + \htwonorm{\Lambda_{R}^{1/2} \mtLambda_u'}^{2}.\\
\end{aligned}
\end{equation}
Denoting the $i$-th eigenvalue of $\mtQ$ and $\mtR$ with $\lambda_{Q,i}, \lambda_{R,i}$, respectively, we have the inequality
\begin{equation}
\begin{aligned}
    &\htwonorm{\Lambda_{Q}^{1/2} \mtLambda_x'}^{2} + \htwonorm{\Lambda_{R}^{1/2} \mtLambda_u'}^{2}.\\
    =\;&\sum_{i,j} \lambda_{Q, i} \htwonorm{[\mtLambda_x']_{ji}}^2 + \sum_{i,j} \lambda_{R, i} \htwonorm{[\mtLambda_u']_{ji}}^2\\
    \leq\;& \sum_{i,j} \lambda_{Q, i} \htwonorm{[\mtLambda_x^\opt]_{ji}}^2 + \sum_{i,j} \lambda_{R, i} \htwonorm{[\mtLambda_u^\opt]_{ji}}^2\\
    =\;&\htwonorm{\Lambda_{Q}^{1/2} \mtLambda_x^*}^{2} + \htwonorm{\Lambda_{R}^{1/2} \mtLambda_u^*}^{2},\\
\end{aligned}
\end{equation}
which follows from the definition of $\mtLambda_x'$ and $\mtLambda_u'$ in equation \eqref{eq:diag_phi}. Reversing the steps in \eqref{eq:spectral-cost}, we see that $\resp_x', \resp_u'$ achieves a cost at least as good as that of $\optresp_x, \optresp_u$. We can thus conclude that there always exists an optimal simultaneously diagonalizable system response to the LQR problem \eqref{eq:SLSOptimization}. The controller $\mtK' = (\resp_u')(\resp_x')^{-1}$ is thus optimal and simultaneously diagonalizable by $V$.

\subsection{GSS and Controllers Satisfy Quadratic Invariance}
Here we show that optimal control problems over graph symmetric systems and controllers satisfy quadratic invariance \citep{rotkowitz2005characterization}.  Before proceeding, we remark that the analysis of LQR optimal control problem over GSSs does not require quadratic invariance.  In particular, in Theorem \ref{thm:optimalLinear} we analyze the \emph{unconstrained optimal control problem} and show that the resulting \emph{unconstrained optimal controller} satisfies a corresponding notion of graph symmetry.  However, in the interest of completeness, we show that if such a constraint were imposed on the controller during synthesis, the resulting problem satisfies quadratic invariance.

To that end, the corresponding constrained controller synthesis problem can be stated as
\begin{equation*}
    \begin{aligned}
        \underset{\mtK}{\text{minimize}}&\quad J(\mtK)\\
        \text{subject to}&\quad \mtK\ \text{stabilizes system \eqref{eq:linearDynamics}}\\
        &\quad \mtK \in \mathcal{S}
    \end{aligned}
\end{equation*}
where $\mathcal{S} := \{S \in \mathcal{RH}_{\infty}\ |\ S\ \text{is diagonalizable by } V\}.$ Denoting the plant input-output transfer function as 
$$\mtG(z) = (zI-A)^{-1}B,$$
we have the following proposition.
\begin{proposition}
The set of graph symmetric controllers $\mathcal{S}$ is quadratically invariant under $\mtG$ if system \eqref{eq:linearDynamics} is graph symmetric.
\end{proposition}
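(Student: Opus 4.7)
The plan is to verify the defining algebraic identity of quadratic invariance, namely that $\mtK \mtG \mtK \in \mathcal{S}$ whenever $\mtK \in \mathcal{S}$. This reduces to showing that the set of transfer matrices simultaneously diagonalizable by the fixed orthonormal basis $V$ is closed under the ``sandwich'' operation $\mtK \mapsto \mtK \mtG \mtK$.

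First I would express the plant in the $V$-basis. Using the graph symmetry assumption $A = V \Lambda_A V^\top$ and $B = V \Lambda_B V^\top$ together with orthonormality of $V$, one obtains
\begin{equation*}
    \mtG(z) = (zI - A)^{-1} B = V\,(zI - \Lambda_A)^{-1}\Lambda_B\, V^\top =: V\,\mathbf{\Lambda}_G(z)\, V^\top,
\end{equation*}
where $\mathbf{\Lambda}_G(z)$ is a diagonal transfer matrix. Hence $\mtG$ itself lies in the same ``diagonalizable by $V$'' class as elements of $\mathcal{S}$ (it is not generally in $\mathcal{RH}_\infty$ since $A$ may be unstable, but that is irrelevant for the algebraic QI check).

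Next I would take an arbitrary $\mtK \in \mathcal{S}$, write it as $\mtK = V \mathbf{\Lambda}_K V^\top$ with $\mathbf{\Lambda}_K$ diagonal in $\mathcal{RH}_\infty$, and compute
\begin{equation*}
    \mtK\mtG\mtK = V\mathbf{\Lambda}_K V^\top V \mathbf{\Lambda}_G V^\top V \mathbf{\Lambda}_K V^\top = V \bigl(\mathbf{\Lambda}_K \mathbf{\Lambda}_G \mathbf{\Lambda}_K\bigr) V^\top,
\end{equation*}
exploiting $V^\top V = I$ to collapse the adjacent $V$ factors. Because products of diagonal transfer matrices are diagonal, the inner factor $\mathbf{\Lambda}_K \mathbf{\Lambda}_G \mathbf{\Lambda}_K$ is diagonal, so $\mtK\mtG\mtK$ is diagonalizable by $V$. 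Membership in $\mathcal{S}$ then follows from the definition.

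The only subtlety, which I expect to be the mild sticking point, is the $\mathcal{RH}_\infty$ requirement in the definition of $\mathcal{S}$: if $A$ has unstable eigenvalues then $\mathbf{\Lambda}_G$ and hence $\mtK\mtG\mtK$ need not be stable. This is handled exactly as in the standard QI literature~\citep{rotkowitz2005characterization} by interpreting the QI condition over an ambient Banach space of (possibly unstable) proper rational transfer functions and checking the structural constraint there; the stability constraint is then folded back in via a Youla/SLS-type parameterization. With that convention the argument above is complete, and the two-line diagonal-times-diagonal computation is the entire content of the proof.
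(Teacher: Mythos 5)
Your proof is correct and follows essentially the same route as the paper's: observe that $\mtG(z) = V(zI-\Lambda_A)^{-1}\Lambda_B V^\top$ is diagonalizable by $V$, then note that the product $\mtK\mtG\mtK$ of simultaneously diagonalizable transfer matrices remains diagonalizable by $V$. Your closing remark about the $\mathcal{RH}_\infty$ membership of $\mtK\mtG\mtK$ when $A$ is unstable is a legitimate subtlety that the paper's one-line proof glosses over, and your resolution (checking the structural QI condition in an ambient space of proper rational transfer functions, as in the standard QI literature) is the right one.
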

\begin{proof}
The proof follows directly from the definition of quadratic invariance \cite[Def. 2]{rotkowitz2005characterization} by straightforward calculation. First, we note that
$$\mtG(z) = V(zI-\mtLambda_A)^{-1}\mtLambda_BV^\top$$
is diagonalizable by $V$. For any controller $\mtK \in \mathcal{S}$, it then follows immediately that
$$\mtK \mtG \mtK \in \mathcal{S}$$
as the product of simultaneously diagonalizable matrices is also simultaneously diagonalizable, proving the claim.
\end{proof}

\end{document}